\newtheorem{thm}{Theorem}
\newtheorem{defn}{Definition}
\newtheorem{lem}{Lemma}
\newtheorem{cor}{Corollary}
\newtheorem{ex}{Example}
\newenvironment{theorem}{\begin{thm}}{\end{thm}}
\newenvironment{corollary}{\begin{cor}}{\end{cor}}
\newenvironment{example}{\begin{ex}}{\end{ex}\medskip}
\newcommand{\eps}{\varepsilon}
\newcommand{\minus}{\scalebox{0.75}[1.0]{\( - \)}}
\title{Dynamic Traitor Tracing for Arbitrary Alphabets: Divide and Conquer}
\author{Thijs Laarhoven\footnote{T. Laarhoven and J. Oosterwijk are with the Department of Mathematics and Computer Science, Eindhoven University of Technology, 5612 AZ Eindhoven, The Netherlands. \protect\\
E-mail: \{t.m.m.laarhoven,j.oosterwijk\}@tue.nl.} \and Jan-Jaap Oosterwijk\footnotemark[1] \and Jeroen Doumen\footnote{J. Doumen is with Irdeto BV, 2132 LS Hoofddorp, The Netherlands.\protect\\
E-mail: jdoumen@irdeto.com.}}
\date{\today}
\begin{document}
\maketitle

\begin{abstract}
We give a generic divide-and-conquer approach for constructing collusion-resistant probabilistic dynamic traitor tracing schemes with larger alphabets from schemes with smaller alphabets. This construction offers a linear tradeoff between the alphabet size and the codelength. In particular, we show that applying our results to the binary dynamic Tardos scheme of Laarhoven et al.\ leads to schemes that are shorter by a factor equal to half the alphabet size. Asymptotically, these codelengths correspond, up to a constant factor, to the fingerprinting capacity for static probabilistic schemes. This gives a hierarchy of probabilistic dynamic traitor tracing schemes, and bridges the gap between the low bandwidth, high codelength scheme of Laarhoven et al.\ and the high bandwidth, low codelength scheme of Fiat and Tassa.
\end{abstract}

%



\section{Introduction}
\label{sec:introduction}

In this day and age of digital technology, protecting digital data from unauthorized copying and redistribution is an increasingly relevant problem. By embedding unique and imperceptible fingerprints in each copy of the content, distributors of digital content can trace pirated copies to the pirate. However, a more difficult scenario arises when several users who have purchased a copy \textit{collude} to form a coalition. When receiving their fingerprinted content, colluders can compare their copies to detect parts of the fingerprints: with the content being the same for all colluders, the differences they detect must be part of the fingerprints. Then, just assigning unique fingerprints to each user is not sufficient anymore, as the colluders may output a forgery that does not match any of their copies exactly. For this, we need \textit{collusion-resistant traitor tracing schemes}, consisting of a way to assign fingerprints to users, and an algorithm to trace a forged copy to the colluders. 


\subsection{Model}

Several models have been considered for this fingerprinting game. We will focus on the \textit{restricted digit model}, where for each segment of the content, colluders always output one of their fingerprinted segments. This means that if, in some segment, all colluders receive the same fingerprint, they are forced to output this version of the content. In the literature, this is usually called the \textit{marking condition} or \textit{marking assumption}. Depending on the application, we also consider two different types of schemes. In \textit{static schemes}, for each user, the distributor generates all fingerprinted segments at once. After the pirates generate a forged copy of the whole content, the accusation algorithm has to trace this single forged copy to the colluders. In this scenario, it is impossible to guarantee that all colluders are caught, so we only require that at least one colluder is caught. In \textit{dynamic schemes} however, the content owner is more powerful, as after every single content segment he can try to catch and disconnect pirates, and adjust the fingerprints for the next segments based on the previous pirate output. When a colluder is disconnected, he no longer receives content, and we assume the other colluders continue outputting watermarked content. With dynamic schemes we are therefore able to catch all colluders, so we only say a dynamic scheme is successful if all colluders are traced. For static schemes, think of DVDs and CDs, while the dynamic setting applies to pay-tv and other live broadcasts.

A further classification of the schemes depends on the notion of security we want the scheme to achieve. For \textit{probabilistic schemes} we demand that (i) with probability at most $\eps_1$ one or more of the innocent users are caught, and (ii) with probability at most $\eps_2$ we do not catch any colluder (static schemes) or all colluders (dynamic schemes). For \textit{deterministic schemes}, we demand that $\eps_1 = \eps_2 = 0$. 


\subsection{Notation}

For convenience, we introduce some more notation. We write $C$ for the set of colluders, and denote the number of colluders by $c = |C|$. We write $U$ for the set of all users, and we denote its size by $n = |U|$. For each segment $i$, at most $q$ different robust versions can be generated. We denote these by the alphabet $Q = \{0, 1, \ldots, q - 1\}$. We denote the number of successive segments that the scheme needs by the codelength $\ell$. We put the fingerprints in a matrix $X$, where each row corresponds to a user $j$ and each column to a segment $i$. To avoid confusion, throughout the paper we will consequently reserve $j$ for indexing users and $i$ for indexing positions or segments. After the code matrix $X$ is generated, the colluders get together to form a pirated copy $\vec{y}$, which due to the marking condition satisfies $y_i \in \{X_{j,i}: j \in C\}$. Then, the distributor detects this pirate output, and uses some tracing algorithm $\sigma$ on the pirate output $\vec{y}$ and code matrix $X$ to accuse a set of users $C' = \sigma(\vec{y}) \subseteq U$. A probabilistic scheme is then successful if $P(C' \not\subseteq C) \leq \eps_1$, and $P(C' \cap C = \emptyset) \leq \eps_2$ (static schemes) or $P(C \not\subseteq C') \leq \eps_2$.
\newpage


\subsection{Related work} 

The results in this work are related to probabilistic dynamic traitor tracing schemes, but we will also compare our results with other dynamic or probabilistic schemes. Fiat and Tassa \cite{fiat01} describe a deterministic dynamic scheme, using an alphabet of size $q = 2c + 1$ and achieving a codelength of $\ell = c \log_2 n + c$. Since any deterministic (dynamic) scheme requires the use of an alphabet of size $q \geq c + 1$, Berkman et al.\ \cite{berkman01} then investigated whether with $q = c + 1$ one could also efficiently catch all colluders. They showed that this can be done with a codelength of $\ell = O(c^2 + c \log_2(n))$. In the area of probabilistic static schemes, the scheme of Boneh and Shaw \cite{boneh98} was the first breakthrough, achieving a codelength polynomial in the number of colluders, with an alphabet size of $q = 2$. A further improvement was given by Tardos \cite{tardos03}, who constructed a binary ($q = 2$) scheme achieving codelengths $\ell = 100 c^2 \lceil\ln(n/\eps_1)\rceil$. This scheme is widely known as the Tardos scheme. Several papers \cite{blayer08,skoric08,skoric06} then showed how the constant $100$ can be further reduced, and Laarhoven and De Weger \cite{laarhoven11} finally showed how to achieve the optimal codelength of the binary symmetric Tardos scheme, given by $\ell = (\frac{\pi^2}{2} + O(c^{-1/3})) c^2 \ln(n/\eps_1)$. Building upon this optimal static Tardos scheme, Laarhoven et al.\ \cite{laarhoven11b} showed how to construct an efficient binary dynamic Tardos scheme, which has the same asymptotic codelength (for $c \to \infty$) as the optimal static Tardos scheme, but is able to catch \textit{all} colluders with high probability. This scheme improved upon the earlier scheme of Tassa \cite{tassa05}, which uses codelengths quartic in $c$.

Besides constructions of traitor tracing schemes, several papers have also investigated theoretical bounds on the codelength needed to catch a certain number of colluders. So far, these have all focused on probabilistic static schemes. Tardos \cite{tardos03} showed that his codelength is optimal up to a constant factor. Huang and Moulin \cite{huang12b} gave the exact capacity of the binary fingerprinting game, by showing that for large $c$, a codelength of $\ell = 2 \ln(2) c^2 \ln(n/\eps_1)$ is both necessary and sufficient. This was then extended to the $q$-ary setting independently by Boesten and \v{S}kori\'{c} \cite{boesten11} and Huang and Moulin \cite{huang12}, showing that the $q$-ary capacity corresponds to $2 \ln(q) \frac{c^2}{q - 1} \ln(n/\eps_1)$ bits of information, or a codelength of $\ell = 2 \ln(2) \frac{c^2}{q - 1} \ln(n/\eps_1)$ symbols from a $q$-ary alphabet. 


\subsection{Contributions and outline}

In this paper, we give a generic divide-and-conquer approach for constructing probabilistic dynamic traitor tracing schemes with large alphabets from schemes with small alphabets. This construction provides a linear tradeoff between the alphabet size $q$ and the codelength $\ell$; increasing the alphabet size by a factor $k$ leads to codes that are a factor $k$ shorter. This construction can be applied to any low-bandwidth probabilistic dynamic traitor tracing scheme, and in particular to the (binary) dynamic Tardos scheme of Laarhoven et al.\ \cite{laarhoven11b}. We show that for arbitrary alphabet sizes $q$, we obtain schemes with codelengths $\ell = (\pi^2 + O((c/q)^{-1/3} + (c \ln(\frac{q}{\eps_2})/q)^{-1/2})) \frac{c^2}{q} \ln(n/\eps_1)$, matching the fingerprinting capacity for \textit{static} $q$-ary traitor tracing schemes up to constant factors. Letting $q = O(c^{1 - \gamma})$ for some $\gamma > 0$, we get asymptotic codelengths of $\ell = (\pi^2 + O(c^{-\gamma/3})) c^{1+\gamma} \ln(n/\eps_1)$, improving upon the codelengths (and alphabet size) of Berkman et al.\ \cite{berkman01} for large $c$. As $\gamma \to 0$, these codelengths also approach the asymptotic codelengths of Fiat and Tassa \cite{fiat01}.

The outline of the paper is as follows. In Section~\ref{sec:construction}, we describe the divide-and-conquer technique to build schemes with larger alphabets from schemes with smaller alphabet sizes. Then, in Section~\ref{sec:tardos}, we apply the results to the binary dynamic Tardos scheme to obtain an efficient $q$-ary dynamic Tardos scheme, and we compare our results with previous results from the literature. Finally, in Section~\ref{sec:summary}, we give a brief summary and discussion of the results, and we mention some directions for future research.


\section{Construction}
\label{sec:construction}

First, let us assume that for a given alphabet size $q_0$, we have some construction mechanism $\mathcal{S}_{q_0}$ for generating $q_0$-ary dynamic traitor tracing schemes (consisting of a code $X$ and a tracing algorithm $\sigma$) for any given maximum number of colluders $c$, total number of users $n$, and for given upper bounds $\eps_1$ and $\eps_2$ on the false positive and false negative error probabilities respectively. Now, to efficiently combat collusion attacks with an alphabet of size $q = 2q_0$, we follow a two-stage process. First, we \textit{divide} (see Subsection~\ref{sub:divide}) the colluders in two groups of roughly equal size, and generate $q_0$-ary traitor tracing schemes for each group separately. Then we show how to combine these codes, such that we can \textit{conquer} (see Subsection~\ref{sub:conquer}) the whole coalition using short $q$-ary codes. Finally, in Subsection~\ref{sub:arbitrary} we show how to generalize this approach to arbitrary divisions, where $q = kq_0$ for some $k \geq 2$.


\subsection{Divide}
\label{sub:divide}

Before we even start thinking about traitor tracing schemes, we consider the following problem: How can we divide the set of users $U$ in two groups $U^{(1)}, U^{(2)}$, such that each group contains the same number of colluders? Since we have no idea which of the users are the colluders, it is impossible to always do this correctly. However, if we allow some room for error, this problem can be solved quite easily. Assuming $n$ is even, we first randomly divide the set of users $U$ in two groups $U^{(1)}$ and $U^{(2)}$ of size $n/2$. Let the number of colluders in each group be denoted by $C^{(t)}$, for $t = 1,2$. Then, the number of colluders $C^{(1)}$ in $U^{(1)}$ follows a hypergeometric distribution, i.e., we are taking $n/2$ samples from a population of size $n$ with $c$ successes \textit{without replacement}. To prove that both groups contain roughly the same number of colluders, note that $\max_{t = 1,2} C^{(t)} > c/2 + a$ if and only if $|C^{(1)} - c/2| > a$. To bound the probability of the latter event, we apply a result of Chv\'{a}tal \cite{chvatal79}, which is very similar to Chernoff's bound \cite{chernoff52} for estimating tail probabilities of binomial distributions. For arbitrary values of $a > 0$, we get
\begin{align}
P\left(C^{(1)} > \frac{c}{2} + a\right) \leq e^{-2a^2/c}.
\end{align}
Furthermore, by symmetry we have $P(|C^{(1)} - \frac{c}{2}| > a) = 2P(C^{(1)} > \frac{c}{2} + a)$. So for any $\eps_2 > 0$ and $\alpha_2 = \sqrt{\ln \frac{4}{\eps_2}}$, we can take $a = \alpha_2 \sqrt{\frac{c}{2}}$ to get
\begin{align*}
P\left(\max_{t = 1,2} C^{(t)} > \frac{c}{2} + \alpha_2 \sqrt{\frac{c}{2}}\right) \leq \frac{\eps_2}{2}.
\end{align*}
Hence, each group contains $n/2$ users in total, and with probability at least $1 - \frac{\eps_2}{2}$ each group contains at most $\frac{c}{2} + \alpha_2 \sqrt{\frac{c}{2}}$ colluders.

After splitting the users in groups, for each group $t = 1, 2$, we independently generate a $q_0$-ary dynamic traitor tracing scheme $(X^{(t)}, \sigma^{(t)})$, using $\mathcal{S}_{q_0}$. For each scheme, we use a different set of $q_0$ symbols, e.g., the symbols $Q^{(1)} = \{0, \ldots, q_0-1\}$ for $U^{(1)}$, and $Q^{(2)} = \{q_0, \ldots, q-1\}$ for $U^{(2)}$. The parameters to use for generating these schemes are given below:
\begin{align*}
\left\{\begin{array}{ll}
c^{(t)} = \dfrac{c}{2} + \alpha_2 \sqrt{\dfrac{c}{2}}, \quad & \eps_1^{(t)} = \dfrac{\eps_1}{2}, \\
n^{(t)} = \dfrac{n}{2}, & \eps_2^{(t)} = \dfrac{\eps_2}{4}.
\end{array}\right\} 
\quad (t = 1,2)
\end{align*}
Here, $c^{(1)}$ is the number of colluders the scheme for $U^{(1)}$ should be resistant against, etc. This leads to two code matrices $X^{(1,2)}$ with respective codelengths $\ell^{(1,2)}$, and two tracing algorithms $\sigma^{(1,2)}$.


\subsection{Conquer}
\label{sub:conquer}

Having finished the preprocessing, we now show how to weave the two $q_0$-ary codes $X^{(1,2)}$ and tracing algorithms $\sigma^{(1,2)}$ into a single $q$-ary code $X$ and tracing algorithm $\sigma$. We start by setting $i = i^{(1)} = i^{(2)} = 1$, where $i$ denotes the current position in the code $X$, and $i^{(1,2)}$ denote the current positions in $X^{(1,2)}$. 

Now, at each position $i$ and for both $t = 1,2$, we send to each user $j \in U^{(t)}$ his $i$th symbol $X_{j,i} = X^{(t)}_{j,i^{(t)}} \in Q^{(t)}$. If $i^{(t)} > \ell^{(t)}$, we expect to have caught all colluders in group $U^{(t)}$ already, so we then assign all active users in $U^{(t)}$ the empty fingerprint, denoted by $\lambda$. 

Then, after sending the $i$th symbols to all users, the coalition chooses an output symbol $y_i$. The distributor then detects this forgery, and does the following.
\begin{itemize}
  \item If $y_i \in Q^{(1)}$, we apply $\sigma^{(1)}$ to $y_i$ and the users in $U^{(1)}$. This may involve calculating accusation scores, disconnecting users etc. For users in $U^{(2)}$, nothing happens. When this is done, we increase $i^{(1)}$ by $1$.
	\item If $y_i \in Q^{(2)}$, we apply $\sigma^{(2)}$ to $y_i$ and the users in $U^{(2)}$. For users in $U^{(1)}$, we do not do anything. Afterwards, we increase $i^{(2)}$ by $1$.
	\item If $y_i = \lambda$, we terminate, and we say the scheme has failed.
\end{itemize}
Finally, we increase $i$ by $1$ and we start with sending the new round of symbols to the users. This continues until either $i > \ell^{(1)} + \ell^{(2)}$, or no pirate output is detected anymore and all colluders are caught. Theorem~\ref{thm:thm1} tells us that when the scheme terminates, with high probability we will be in the latter scenario. Before we state the theorem, we will illustrate the construction with an example.

\begin{example}
Let $U = \{1,\ldots,8\}$ and $Q = \{0,\ldots,3\}$, and suppose we want to find the (hidden) coalition $C = \{1,3,7,8\}$ of size $c = 4$. First, we divide the group of users into two groups $U^{(1)} = \{1,\ldots,4\}$ and $U^{(2)} = \{5,\ldots,8\}$, and we hope each group now contains $2$ colluders. Next, we use a construction mechanism $\mathcal{S}_2$ which allows us to generate binary dynamic traitor tracing schemes for $c = 2$ for each group, resulting in the following codes $X^{(1,2)}$ of length $\ell^{(1,2)} = 5$:
\begin{align*}
X^{(1)} = \begin{pmatrix} 0 & 1 & 0 & 1 & 1 \\ 0 & 0 & 1 & 1 & 0 \\ 1 & 0 & 1 & 0 & 1 \\ 1 & 1 & 0 & 0 & 0 \end{pmatrix}, \quad 
X^{(2)} = \begin{pmatrix} 2 & 2 & 3 & 2 & 2 \\ 3 & 2 & 2 & 3 & 3 \\ 3 & 3 & 3 & 3 & 2 \\ 2 & 3 & 2 & 2 & 3 \end{pmatrix}. 
\end{align*}
We are now ready to conquer the coalition. One by one we send the symbols, and respond to the coalition as described in Section~\ref{sub:conquer}. This leads to the following code matrix $X$ and pirate output $y$.
\setcounter{MaxMatrixCols}{20}
\begin{align*}
X &= \begin{pmatrix}
0 & \mathbf 0 & 1 & 1 & \mathbf 1 & \mathbf 0 & 1 & 1 & \mathbf 1 & \mathbf 1 & \minus \\
0 & \mathbf 0 & 0 & 0 & \mathbf 0 & \mathbf 1 & 1 & 1 & \mathbf 1 & \mathbf 0 & \lambda \\
1 & \mathbf 1 & 0 & 0 & \mathbf 0 & \mathbf 1 & 0 & 0 & \mathbf 0 & \mathbf 1 & \minus \\
1 & \mathbf 1 & 1 & 1 & \mathbf 1 & \mathbf 0 & 0 & 0 & \mathbf 0 & \mathbf 0 & \lambda \\
\mathbf 2 & 2 & \mathbf 2 & \mathbf 3 & 2 & 2 & \mathbf 2 & \mathbf 2 & \lambda & \lambda & \lambda \\
\mathbf 3 & 2 & \mathbf 2 & \mathbf 2 & 3 & 3 & \mathbf 3 & \mathbf 3 & \lambda & \lambda & \lambda \\
\mathbf 3 & 3 & \mathbf 3 & \mathbf 3 & \minus & \minus & \minus & \minus & \minus & \minus & \minus \\
\mathbf 2 & 3 & \mathbf 3 & \mathbf 2 & 2 & 2 & \mathbf 2 & \mathbf 3 & \minus & \minus & \minus 
\end{pmatrix} \\ 
y &= \; \, \begin{pmatrix}
\mathbf 3 & \mathbf 0 & \mathbf 3 & \mathbf 3 & \mathbf 1 & \mathbf 1 & \mathbf 2 & \mathbf 3 & \mathbf 0 & \mathbf 1 & \minus 
\end{pmatrix}
\end{align*}
The dashes represent disconnected users. In this case, at the end all four colluders have been caught and no innocent users were harmed in the process. Note that the bold half-columns, corresponding to segments $i$ where the pirate output $y_i$ is a symbol from that half of the alphabet, together form the codes $X^{(1)}$ and $X^{(2)}$. 
\end{example} \vspace{-0.3cm}

\begin{theorem} \label{thm:thm1}
Let the $q$-ary traitor tracing scheme be constructed as described earlier. Then, with probability at most $\eps_1$ at least one innocent user is caught, and with probability at most $\eps_2$ not all pirates are disconnected after at most $\ell = \ell^{(1)} + \ell^{(2)}$ segments.
\end{theorem}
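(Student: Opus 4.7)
The plan is to handle the false-positive and false-negative bounds separately, and in each case reduce the claim to the corresponding guarantee for the $q_0$-ary sub-schemes $\sigma^{(1)}$ and $\sigma^{(2)}$.

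The key preliminary observation is that by the marking condition, whenever $y_i \in Q^{(t)}$, the symbol $y_i$ must have been assigned to some colluder in $U^{(t)}$, since users outside $U^{(t)}$ are never given symbols from $Q^{(t)}$. Restricting attention to the segments $i$ with $y_i \in Q^{(t)}$ and to the users in $U^{(t)}$ alone therefore exhibits a perfectly valid run of the stand-alone $q_0$-ary scheme $(X^{(t)}, \sigma^{(t)})$ against the coalition $C \cap U^{(t)}$, so the sub-scheme's probabilistic guarantees transfer directly to its operation inside the combined scheme.

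For the false-positive bound, an innocent user $j \in U^{(t)}$ can only ever be accused through $\sigma^{(t)}$. Since $\sigma^{(t)}$ was instantiated with $\eps_1^{(t)} = \eps_1/2$, and since its false-positive guarantee is a property of innocent users' scores (which are independent of the pirate strategy) and therefore holds regardless of the actual number of colluders in $U^{(t)}$ — as is standard for probabilistic fingerprinting schemes — the probability that $\sigma^{(t)}$ accuses any innocent user in $U^{(t)}$ is at most $\eps_1/2$. A union bound over $t \in \{1,2\}$ gives the overall $\eps_1$ bound.

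For the false-negative bound, I would union-bound three bad events. First, the bad division event $D^c$ (some group $U^{(t)}$ contains more than $c^{(t)}$ colluders) has probability at most $\eps_2/2$ by the analysis in Subsection~\ref{sub:divide}. Conditional on $D$, for $t = 1,2$ the event that $\sigma^{(t)}$ fails to disconnect all colluders in $U^{(t)}$ within its own $\ell^{(t)}$ segments has probability at most $\eps_2^{(t)} = \eps_2/4$, since $\sigma^{(t)}$ was designed for up to $c^{(t)}$ colluders. It remains to check that when $D$ holds and both sub-schemes succeed, all colluders are disconnected within the first $\ell^{(1)} + \ell^{(2)}$ combined segments. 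Each combined segment either increments $i^{(1)}$, increments $i^{(2)}$, or triggers $\lambda$-termination; the $\lambda$ case can arise only if every still-active colluder lies in a group whose sub-scheme has already exhausted its segment budget without catching them — precisely one of the sub-scheme failure events already accounted for. Absent such a failure, after $\ell^{(1)} + \ell^{(2)}$ non-$\lambda$ segments both counters must have advanced past their respective $\ell^{(t)}$, by which point both sub-schemes have caught all their colluders. Summing gives $\eps_2/2 + \eps_2/4 + \eps_2/4 = \eps_2$.

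The main obstacle I anticipate is the last interleaving argument: cleanly verifying that a $\lambda$-output cannot occur except as a consequence of a sub-scheme failure, and that both counters really do reach past their respective codelengths within $\ell^{(1)} + \ell^{(2)}$ combined segments. The mild care needed to ensure the sub-scheme's false-positive bound continues to apply when $\sigma^{(t)}$ is embedded in the combined scheme rather than run in isolation is handled by the marking-condition observation above.
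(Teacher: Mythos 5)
Your proposal is correct and follows essentially the same route as the paper's proof: reduce the false-positive bound to the per-group guarantees of $\sigma^{(1)},\sigma^{(2)}$ (the paper multiplies $(1-\eps_1^{(1)})(1-\eps_1^{(2)})$ where you union-bound, which is equivalent), and bound the false-negative probability by combining the division-error event with the two sub-scheme failure events, using the fact that each segment advances exactly one of the counters $i^{(1)},i^{(2)}$. Your explicit appeal to the marking condition to justify that the restricted view is a faithful run of the stand-alone $q_0$-ary scheme makes precise a step the paper leaves implicit, but it is the same argument.
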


\begin{proof}
First, note that for innocent users, nothing really changes compared to the original $q$-ary scheme. For innocent users $j \in U^{(t)}$ (for some $t = 1,2$) and positions $i$ where $y_i \notin Q^{(t)}$, the accusation algorithm does not do anything, so we only have to consider the positions $i$ where $y_i \in Q^{(t)}$. On these positions, we use the algorithm $\sigma^{(t)}$ as in the original $q_0$-ary scheme. But for the original scheme we know that if we use at most $\ell^{(t)}$ symbols, the probability that no innocent users in group $t$ are accused is at least $1 - \eps_1^{(t)}$. So the probability that none of the innocent users in any group is disconnected is at least $(1 - \eps_1^{(1)})(1 - \eps_1^{(2)}) \geq 1 - \eps_1$, as was to be shown.

For guilty users, we also use a reduction-argument to prove that with high probability, all colluders are caught. First, with probability at least $1 - \frac{\eps_2}{2}$ the number of colluders in each group is bounded from above by $\frac{c}{2} + \alpha_2 \sqrt{\frac{c}{2}}$. If this is indeed the case, then the analysis of the original schemes tells us that after at most $\ell^{(t)}$ positions, with probability at least $1 - \frac{\eps_2}{4}$ all colluders in any one of these groups is caught. Since at each segment, either $i^{(1)}$ or $i^{(2)}$ increases, at some point one of them, say $i^{(t)}$, will exceed $\ell^{(t)}$. Then we know that we will have caught all colluders with probability at least $1 - \frac{\eps_2}{4}$. So the only remaining active colluders are in the other group $U^{(t')}$, for which we also know that with probability at least $1 - \frac{\eps_2}{4}$ we will catch all colluders before $i^{(t')}$ exceeds $\ell^{(t')}$. So with probability at least $(1 - \frac{\eps_2}{2}) (1 - \frac{\eps_2}{4})^2 \geq 1 - \eps_2$, the division and both schemes are successful, and we will catch all pirates after at most $\ell^{(1)} + \ell^{(2)}$ symbols.
\end{proof}

It follows that if we have a construction mechanism $\mathcal{S}_{q_0}$ that produces schemes with codelengths $\ell_{q_0}(c,n,\eps_1,\eps_2)$ quadratic in $c$ and logarithmic in $n, \eps_1^{-1}, \eps_2^{-1}$, then the divide-and-conquer technique provides us with $q$-ary schemes (with $q = 2q_0$) achieving codelengths of
\begin{align*}
\ell_q(c,n,\eps_1,\eps_2) = 2\ell_{q_0}\left(\frac{c}{2} + \alpha_2 \sqrt{\frac{c}{2}}, \frac{n}{2}, \frac{\eps_1}{2}, \frac{\eps_2}{4}\right) \\ \approx 2\ell_{q_0}\left(\frac{c}{2}, n, \eps_1, \eps_2\right) \approx \frac{1}{2} \ell_{q_0}(c, n, \eps_1, \eps_2).
\end{align*}
The first approximation follows from $\frac{c}{2} + O(\sqrt{\frac{c}{2}}) \approx \frac{c}{2}$. So the codelength decreases by a factor of approximately $2$, while the alphabet size increases by the same factor $2$.


\subsection{Arbitrary divisions}
\label{sub:arbitrary}

For simplicity, and for explaining the divide-and-conquer technique, in Subsection~\ref{sub:divide} we divided the set of users in $2$ groups of roughly equal size. This can easily be generalized to splitting the users in $k \geq 2$ groups. For simplicity, let us assume that both $n$ and $q$ are divisible by $k$, and that $q = k q_0$ for some $q_0$. Let us denote the random variable describing the distribution of colluders among the groups by a vector $\vec{C} = (C^{(1)}, \ldots, C^{(k)})$, with $C^{(t)}$ being the number of colluders assigned to group $U^{(t)}$. Then, for each $t$, the random variable $C^{(t)}$ follows a hypergeometric distribution with mean $\frac{c}{k}$ and variance less than $\frac{c}{k}$. 

Similar to the fact that the tails of the hypergeometric distribution are smaller than the tails of the binomial distribution, it can be shown that the probability that $\max_{1\leq t \leq k} C^{(t)}$ exceeds some value $a$ is smaller than the probability that the maximum entry $\max_{1 \leq t \leq k} M^{(t)}$ of a uniform multinomial random variable $\vec{M} = (M^{(1)}, \ldots, M^{(k)})$ exceeds the same value $a$. This allows us to apply a result from Raab and Steger \cite[Theorem 1]{raab98}, which says that for values $k$ such that $k \ln k = o(c)$, this maximum $\max_{1 \leq t \leq k} M^{(t)}$ is always very close to its mean $\frac{c}{k}$. More precisely, for $\alpha_k = O\left(\sqrt{\ln\frac{k}{\eps_2}}\right)$ for some $\eps_2 > 0$, with high probability the group with the largest number of colluders will not contain more than $\frac{c}{k} + \alpha_k \sqrt{\frac{c}{k}}$ colluders:
\begin{align*}
P\left(\max_{1\leq t \leq k} C^{(t)} > \frac{c}{k} + \alpha_k \sqrt{\frac{c}{k}}\right) \leq \frac{\eps_2}{2}.
\end{align*}
So after splitting the users in $k$ groups of size $\frac{n}{k}$, we know that with probability at least $1 - \frac{\eps_2}{2}$ each group contains at most $\frac{c}{k} + \alpha_k\sqrt{\frac{c}{k}}$ colluders. Then, for each group we independently generate $q_0$-ary traitor tracing schemes $(X^{(t)}, \sigma^{(t)})$ using $\mathcal{S}_{q_0}$, with parameters
\begin{align*} 
\left\{\begin{array}{ll}
c^{(t)} = \dfrac{c}{k} + \alpha_k \sqrt{\dfrac{c}{k}}, \quad & \eps_1^{(t)} = \dfrac{\eps_1}{k}, \\ 
n^{(t)} = \dfrac{n}{k}, & \eps_2^{(t)} = \dfrac{\eps_2}{2k}.
\end{array}\right\} 
\quad (t = 1, \ldots, k)
\end{align*}
The probability that the splits go well and the tracing of traitors in each group goes well, is at least $(1 - \frac{\eps_2}{2}) (1 - \frac{\eps_2}{2k})^{k} \geq 1 - \eps_2$. The conquer-phase can then analogously be generalized to $k$ groups, weaving $k$ codes $X^{(t)}$ together to a big code $X$. We then end up with a $q$-ary traitor tracing scheme with the following properties.

\begin{theorem} \label{thm:thm2}
Let the $q$-ary traitor tracing scheme be as described above. Then, with probability at most $\eps_1$ at least one innocent user is caught, and with probability at most $\eps_2$ not all pirates are disconnected after at most $\ell = \sum_{t=1}^k \ell^{(t)}$ segments.
\end{theorem}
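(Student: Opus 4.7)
The plan is to mirror the proof of Theorem~\ref{thm:thm1} with $k$ groups in place of $2$, replacing the multiplicative error bookkeeping by a union bound where it is cleaner. The key structural observation, as in the $k=2$ case, is that for each $t$ the sub-sequence of positions $i$ with $y_i \in Q^{(t)}$, together with the symbols shipped to $U^{(t)}$ from $X^{(t)}$ and the actions of $\sigma^{(t)}$, constitutes a faithful execution of the original $q_0$-ary scheme $(X^{(t)}, \sigma^{(t)})$ on its own clock $i^{(t)}$. Positions with $y_i \notin Q^{(t)}$ leave this sub-execution frozen, since users in $U^{(t)}$ only ever see symbols in $Q^{(t)} \cup \{\lambda\}$ and $\sigma^{(t)}$ only acts when $y_i \in Q^{(t)}$. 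Moreover, since only users in $U^{(t)}$ are ever assigned symbols from $Q^{(t)}$, the global marking condition forces $y_i \in \{X^{(t)}_{j, i^{(t)}} : j \in C \cap U^{(t)}\}$ whenever $y_i \in Q^{(t)}$, which is exactly what the $q_0$-ary scheme was designed to handle.

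For the false-positive bound I would invoke the guarantee of each underlying $q_0$-ary scheme: the probability that $\sigma^{(t)}$ accuses any innocent user in $U^{(t)}$ is at most $\eps_1^{(t)} = \eps_1/k$, and a union bound over the $k$ groups yields a total false-positive probability of at most $\eps_1$.

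For the false-negative bound I would introduce the events $E_0 = \{\max_t C^{(t)} \le c/k + \alpha_k \sqrt{c/k}\}$ and, for $t = 1, \ldots, k$, $E_t = \{\sigma^{(t)} \text{ catches every colluder in } U^{(t)} \text{ within its first } \ell^{(t)} \text{ clock ticks}\}$. The hypergeometric/multinomial bound derived just before the theorem gives $P(\neg E_0) \le \eps_2/2$, and conditional on $E_0$ the $t$-th sub-scheme faces at most $c^{(t)}$ colluders, so by the guarantee of $\mathcal{S}_{q_0}$ we have $P(\neg E_t \mid E_0) \le \eps_2^{(t)} = \eps_2/(2k)$. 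The estimate $(1 - \eps_2/2)(1 - \eps_2/(2k))^k \ge 1 - \eps_2$, valid by Bernoulli's inequality, then shows that $E_0 \cap E_1 \cap \cdots \cap E_k$ holds with probability at least $1 - \eps_2$.

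The step I expect to require the most care is verifying that on $E_0 \cap E_1 \cap \cdots \cap E_k$ the overall scheme really does terminate successfully within $\sum_t \ell^{(t)}$ segments. Here I would argue that under $E_t$ no colluder in $U^{(t)}$ is still active once $i^{(t)}$ has been advanced past $\ell^{(t)}$, so no colluder is ever handed the symbol $\lambda$; hence $y_i \ne \lambda$ at every segment, and exactly one counter $i^{(t)}$ increments at each step. Each $i^{(t)}$ therefore advances at most $\ell^{(t)}$ times before $U^{(t)}$ contains no active colluders to produce a symbol in $Q^{(t)}$, so the total number of segments is bounded by $\sum_t \ell^{(t)}$, by which time every colluder has been disconnected.
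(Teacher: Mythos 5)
Your proposal is correct and follows essentially the same route as the paper, which states Theorem~\ref{thm:thm2} as the direct $k$-group analogue of Theorem~\ref{thm:thm1}: the same parameter choices $\eps_1^{(t)}=\eps_1/k$, $\eps_2^{(t)}=\eps_2/(2k)$, the same event decomposition (good split, then per-group tracing success), and the same bound $(1-\eps_2/2)(1-\eps_2/(2k))^k \ge 1-\eps_2$. Your explicit verification that on the good event no colluder ever receives $\lambda$, that exactly one counter advances per segment, and hence that the total length is at most $\sum_t \ell^{(t)}$, is a welcome elaboration of a step the paper leaves implicit.
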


So if we can construct $q_0$-ary schemes with codelengths $\ell_{q_0}(c,n,\eps_1,\eps_2)$ quadratic in $c$, then the divide-and-conquer technique provides us with $q$-ary schemes ($q = k q_0$) with codelengths
\begin{align*}
\ell_q(c,n,\eps_1,\eps_2) = k\ell_{q_0}\left(\frac{c}{k} + \alpha_k \sqrt{\frac{c}{k}}, \frac{n}{k}, \frac{\eps_1}{k}, \frac{\eps_2}{2k}\right) \\ \approx k\ell_{q_0}\left(\frac{c}{k}, n, \eps_1, \eps_2\right) \approx \frac{1}{k} \ell_{q_0}\left(c, n, \eps_1, \eps_2\right).
\end{align*}
So the codelength decreases by a factor of approximately $k$, while the alphabet size increases by the same factor $k$. In particular, using a binary scheme with a codelength of $\ell_2$ quadratic in $c$ as a starting point, we obtain $q$-ary traitor tracing schemes with codelengths satisfying
\begin{align}
\ell_q(c,n,\eps_1,\eps_2) \approx \frac{2}{q} \ell_2(c,n,\eps_1,\eps_2).
\end{align}

\subsubsection*{Remark} For explaining the divide-and-conquer method, we assumed the smaller codes $X^{(t)}$ were generated in advance, i.e., during the divide-phase. This is not necessary, as one could also generate the new symbols for users on the fly, once they are needed. In practice, one may not want to generate all codewords in advance, but let them depend on the previous pirate ouput. Then, only when $y_{i-1} \in Q^{(t)}$ for some $t$ is known, the distributor generates new symbols for users $j \in U^{(t)}$. This means that this divide-and-conquer method works for any probabilistic dynamic traitor tracing scheme, even when the codewords cannot be generated in advance. 


\section{The $q$-ary dynamic Tardos scheme}
\label{sec:tardos}

Recently, Laarhoven et al.\ \cite{laarhoven11b} showed that one can efficiently turn the binary static Tardos scheme \cite{tardos03}, or any variant thereof \cite{skoric08,blayer08,laarhoven11}, into a dynamic scheme that is able to catch \textit{all} colluders with a codelength that is quadratic in $c$. More precisely, for $q = 2$ and parameters $c, n, \eps_1, \eps_2$, one can create schemes $(X, \sigma)$ with codelengths $L_2$ satisfying
\begin{align}
L_2(c,n,\eps_1,\eps_2) = \left[\frac{\pi^2}{2} + O\left(\sqrt[3]{\frac{1}{c}}\right)\right] c^2 \ln\left(\frac{n}{\eps_1}\right). \label{eq:tardos}
\end{align}
Note that the codelength does depend on $\eps_2$, but $\eps_2$ only appears in lower order terms; see Laarhoven et al.\ \cite{laarhoven11b} for details. Using this construction as our `base construction' $\mathcal{S}_2$, the divide-and-conquer construction allows us to construct $q$-ary dynamic Tardos schemes with the following codelengths $L_q$.

\begin{theorem} \label{thm:thm3}
For arbitrary (even) $q$ satisfying $q \ln q = o(c)$, we can construct $q$-ary dynamic Tardos schemes with codelengths $L_q$ given by
\begin{align}
& L_q(c,n,\eps_1,\eps_2) \nonumber \\ 
 &= \left[\pi^2 + O\left(\sqrt[3]{\frac{q}{c}} + \sqrt{\frac{q \log \frac{q}{\eps_2}}{c}}\right)\right] \frac{c^2}{q} \ln\left(\frac{n}{\eps_1}\right). \label{eq:Lq}
\end{align}
\end{theorem}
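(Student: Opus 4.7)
The plan is to apply Theorem~\ref{thm:thm2} directly, taking the binary dynamic Tardos construction of Laarhoven et al.\ as the base mechanism $\mathcal{S}_{q_0}$ with $q_0 = 2$, and setting the number of groups to $k = q/2$. Since $q \ln q = o(c)$ implies $k \ln k = o(c)$, the hypothesis of Theorem~\ref{thm:thm2} is met and we may invoke the Raab--Steger estimate with $\alpha_k = O\bigl(\sqrt{\ln(k/\eps_2)}\bigr) = O\bigl(\sqrt{\ln(q/\eps_2)}\bigr)$. Substituting~\eqref{eq:tardos} into the codelength expression from Theorem~\ref{thm:thm2} then gives
\begin{align*}
L_q(c,n,\eps_1,\eps_2) = \tfrac{q}{2}\, L_2\!\left(\tfrac{2c}{q} + \alpha_{q/2}\sqrt{\tfrac{2c}{q}},\, \tfrac{2n}{q},\, \tfrac{2\eps_1}{q},\, \tfrac{\eps_2}{q}\right).
\end{align*}
A quick check shows that the logarithmic factor reduces exactly: $\ln\bigl(\tfrac{2n/q}{2\eps_1/q}\bigr) = \ln(n/\eps_1)$.

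Next I would expand the squared effective coalition size. Writing $c' = \tfrac{2c}{q} + \alpha_{q/2}\sqrt{\tfrac{2c}{q}}$, we have
\begin{align*}
(c')^2 = \tfrac{4c^2}{q^2}\left(1 + \alpha_{q/2}\sqrt{\tfrac{q}{2c}}\right)^2 = \tfrac{4c^2}{q^2}\left(1 + O\!\left(\sqrt{\tfrac{q \ln(q/\eps_2)}{c}}\right)\right),
\end{align*}
using $(1+x)^2 = 1 + 2x + x^2$ and absorbing the quadratic term into the $O$-expression since $x = o(1)$ under the standing assumption $q \ln q = o(c)$. Combined with $\tfrac{q}{2}\cdot\tfrac{\pi^2}{2}\cdot\tfrac{4c^2}{q^2} = \pi^2 \tfrac{c^2}{q}$, the leading constant is $\pi^2$.

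Finally I would merge the two error terms. The $O\bigl(\sqrt[3]{1/c'}\bigr)$ correction from~\eqref{eq:tardos}, evaluated at $c' \approx 2c/q$, contributes $O\bigl(\sqrt[3]{q/c}\bigr)$ to the coefficient of $(c^2/q)\ln(n/\eps_1)$, while the expansion of $(c')^2$ contributes the $O\bigl(\sqrt{q\ln(q/\eps_2)/c}\bigr)$ term; all cross-terms are dominated by these two. The main obstacle is really just careful bookkeeping: one must confirm that replacing $n \mapsto 2n/q$, $\eps_1 \mapsto 2\eps_1/q$, and $\eps_2 \mapsto \eps_2/q$ does not inflate the error estimates, which is immediate since $n$ and $\eps_1$ cancel inside the logarithm and $\eps_2$ appears only in lower-order terms (and additionally inside $\alpha_{q/2}$, which is already accounted for). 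Putting these pieces together yields~\eqref{eq:Lq}.
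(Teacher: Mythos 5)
Your proposal is correct and follows essentially the same route as the paper: instantiate Theorem~\ref{thm:thm2} with $k = q/2$ and the binary dynamic Tardos scheme as $\mathcal{S}_2$, observe that the logarithmic factor $\ln(n/\eps_1)$ is unchanged, and expand the squared effective coalition size to extract the leading constant $\pi^2$ and the two stated error terms. The only differences are notational (working directly in $q$ rather than $k$, and factoring $2c/q$ out of $c'$ before squaring instead of expanding the square term by term).
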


\begin{proof}
Let $q = 2k$ be even. Combining Equation~\eqref{eq:tardos} with Theorem~\ref{thm:thm2}, we get
\begin{align*}
& L_q(c,n,\eps_1,\eps_2) = k L_2\left(\frac{c}{k}+\alpha_k \sqrt{\frac{c}{k}},\frac{n}{k},\frac{\eps_1}{k},\frac{\eps_2}{2k}\right) \\ 
 &= k \left[\frac{\pi^2}{2} + O\left(\left(\frac{c}{k} + \alpha_k \sqrt{\frac{c}{k}}\right)^{-1/3}\right)\right] \\
 &\quad \cdot \left[\frac{c}{k} + \alpha_k\sqrt{\frac{c}{k}}\right]^2 \ln\left(\frac{n/k}{\eps_1/k}\right)
\end{align*}
Since $\ln k = o(\frac{c}{k})$, we have $\sqrt{\frac{c}{k} \ln \frac{k}{\eps_2}} = o\left(\frac{c}{k}\right)$, so the order term above simplifies to $O\left(\sqrt[3]{\frac{k}{c}}\right)$. Expanding the square, and observing that the product of the order-terms is small compared to the cross-terms, we get
\begin{align*}
& L_q(c,n,\eps_1,\eps_2) \\
 &= \left[\frac{\pi^2}{2} + O\left(\sqrt[3]{\frac{k}{c}}\right)\right] \left[\frac{c^2}{k} + O\left(c\sqrt{\frac{c \ln \frac{k}{\eps_2}}{k}}\right)\right] \ln\left(\frac{n}{\eps_1}\right) \\
 &= \left[\frac{\pi^2}{2} + O\left(\sqrt[3]{\frac{q}{c}}\right)\right] \left[2 + O\left(\sqrt{\frac{q \ln \frac{q}{\eps_2}}{c}}\right)\right] \frac{c^2}{q} \ln\left(\frac{n}{\eps_1}\right) \\
 &= \left[\pi^2 + O\left(\sqrt[3]{\frac{q}{c}} + \sqrt{\frac{q \ln \frac{q}{\eps_2}}{c}}\right)\right] \frac{c^2}{q} \ln\left(\frac{n}{\eps_1}\right).
\end{align*}
This is exactly Equation~\eqref{eq:Lq}.
\end{proof}

\subsubsection*{Remark} In Equation~\eqref{eq:Lq}, the first order term contains two terms. For small values of $q$ compared to $c$, the first of these terms $O(\sqrt[3]{\frac{q}{c}})$ dominates, as the third root is larger than the square root term. However, for $q \ln q$ close to $O(c)$ and large $q$ and $c$, the second term $O\left(\sqrt{\frac{q \ln \frac{q}{\eps_2}}{c}}\right)$ will start to dominate. So which of these terms is bigger depends on the relation between $q$ and $c$.

Asymptotically, the codelengths $L_q$ in Equation~\eqref{eq:Lq} are a factor $q/2$ shorter than the codelengths of the binary dynamic Tardos scheme. These codelengths also match the static fingerprinting capacity as obtained by Boesten and \v{S}kori\'{c} \cite{boesten11} and Huang and Moulin \cite{huang12}, up to a constant factor. Since we are considering a dynamic setting, this does not mean that these codelengths are optimal, but it does show that converting any $q$-ary static Tardos scheme to a $q$-ary dynamic Tardos scheme via Laarhoven et al.'s construction \cite{laarhoven11b} will at best lead to the same asymptotic codelengths. Figure~\ref{fig1} shows variants of the Tardos scheme, and ways to construct them. To construct a $q$-ary dynamic Tardos scheme from the optimal binary static Tardos scheme of Laarhoven and De Weger \cite{laarhoven11}, one has to (i) make the scheme dynamic, and (ii) go from a binary to a $q$-ary alphabet. First applying (ii) from \v{S}kori\'{c} et al.\ \cite{skoric08} and then applying (i) using the construction of Laarhoven et al.\ \cite{laarhoven11b} leads to codes that are a factor $O(\ln q) \cdot O(1) = O(\ln q)$ shorter. We showed that first applying (i) using the construction of Laarhoven et al.\ \cite{laarhoven11b} and then applying (ii), we get codes that are a factor $O(1) \cdot O(q) = O(q)$ shorter.

\begin{figure}
\centering
\begin{align*}
\def\g#1{\save[].[d]!C="g#1"*[F-:<3pt>]\frm{}\restore}%
\xymatrix@C=1.5cm@R=1cm{
\g1 \text{binary static Tardos} \ar[d]_{O(\ln q)}^{\text{\cite{skoric08}}} & \g2 \text{binary dynamic Tardos} \ar[d]_{O(q)}^{\text{(this work)}} \\
\text{$q$-ary static Tardos} & \text{$q$-ary dynamic Tardos}
\ar"g1";"g2"^{O(1)}_{\text{\cite{laarhoven11b}}}
}
\end{align*}
\caption{Known variants of the Tardos scheme, and a comparison of their asymptotic codelengths. For static Tardos schemes, current methods to construct $q$-ary schemes \cite{skoric08} lead to codelengths that are a factor $O(\ln q)$ shorter than binary schemes. For dynamic Tardos schemes, we showed that with the divide-and-conquer approach, the codelength decreases by a factor of $O(q)$. From any static Tardos scheme, one can obtain a dynamic Tardos scheme with the same alphabet size and the same order codelengths \cite{laarhoven11b}.\vspace{-0.4cm}\label{fig1}}
\end{figure}


\subsection{Large-$q$ asymptotics}
\label{sub:asymptotics}

Instead of considering the asymptotic behaviour of fixed $q$ and large $c$, one could also consider the asymptotic behaviour of large $q$ and $c$. For instance, if we let $q = O(c^{1-\gamma})$ we get the following corollary.

\begin{corollary} \label{cor:cor1}
Let $q = O(c^{1-\gamma})$ for some $\gamma > 0$. Then by Theorem~\ref{thm:thm3}, we can construct $q$-ary dynamic Tardos schemes achieving asymptotic codelengths of
\begin{align}
L_q(c,n,\eps_1,\eps_2) = \left[\pi^2 + O(c^{-\gamma/3})\right] c^{1+\gamma} \ln\left(\frac{n}{\eps_1}\right).
\end{align}
\end{corollary}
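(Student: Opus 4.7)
The plan is to substitute $q = O(c^{1-\gamma})$ directly into Equation~\eqref{eq:Lq} of Theorem~\ref{thm:thm3}, interpreting the scaling as $q = \Theta(c^{1-\gamma})$ so that the prefactor $c^2/q = \Theta(c^{1+\gamma})$ matches the target expression. First I would verify that the hypothesis $q \ln q = o(c)$ of Theorem~\ref{thm:thm3} is satisfied: with $q = O(c^{1-\gamma})$ and any fixed $\gamma > 0$, we have $q \ln q = O(c^{1-\gamma} \ln c) = o(c)$, so the theorem applies.

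The core step is to simplify the two summands inside the big-$O$ of the first-order correction in \eqref{eq:Lq}. Substituting $q = O(c^{1-\gamma})$ gives
\begin{align*}
\sqrt[3]{q/c} &= O(c^{-\gamma/3}), \\
\sqrt{q \ln(q/\eps_2)/c} &= O\bigl(c^{-\gamma/2} \sqrt{\ln c}\bigr).
\end{align*}
The key observation is that for large $c$ the first term strictly dominates the second, since the ratio $c^{-\gamma/3} / \bigl(c^{-\gamma/2} \sqrt{\ln c}\bigr) = c^{\gamma/6}/\sqrt{\ln c}$ tends to infinity for every $\gamma > 0$. Hence the sum of the two correction terms collapses to $O(c^{-\gamma/3})$, and the logarithmic factor drops out entirely.

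Finally I would combine the simplified correction with the prefactor $c^2/q = \Theta(c^{1+\gamma})$ to obtain
\begin{align*}
L_q(c,n,\eps_1,\eps_2) = \bigl[\pi^2 + O(c^{-\gamma/3})\bigr]\, c^{1+\gamma} \ln\bigl(n/\eps_1\bigr),
\end{align*}
which is exactly the claimed asymptotic. The derivation is a routine asymptotic substitution into Theorem~\ref{thm:thm3}, so I do not anticipate any genuine technical obstacle; the only subtlety worth highlighting is the comparison between the cube-root and square-root terms, which explains why the $\ln c$ factor disappears and a clean $c^{-\gamma/3}$ remains.
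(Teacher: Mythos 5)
Your proposal is correct and is exactly the routine substitution the paper intends (the corollary is stated without an explicit proof, being an immediate consequence of Theorem~\ref{thm:thm3}): the hypothesis check $q\ln q = o(c)$, the comparison $c^{-\gamma/2}\sqrt{\ln c} = o(c^{-\gamma/3})$ that absorbs the square-root term, and the prefactor $c^2/q = \Theta(c^{1+\gamma})$ are all handled properly. Your explicit remark that the statement implicitly reads $q = \Theta(c^{1-\gamma})$ rather than merely $O(c^{1-\gamma})$ is a fair and worthwhile clarification, but it does not change the argument.
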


For $\gamma \to 0$, we get alphabet sizes $q$ almost linear in $c$, so it makes sense to compare this construction to the deterministic schemes of Fiat and Tassa \cite{fiat01} and Berkman et al.\ \cite{berkman01}. The optimal scheme of Berkman et al.\ uses an alphabet size of $q = c + 1$, and requires a codelength of $\ell_q(c,n) = O(c^2 + c \log_2 n)$. For large $c$, this scheme therefore requires longer codes \textit{and} larger alphabets than the $q$-ary dynamic Tardos scheme. The scheme of Fiat and Tassa uses an alphabet of size $q = 2c + 1$, and requires a codelength of only $\ell_q(c,n) = c \log_2 n + c$. Our scheme approaches this asymptotic codelength as $\gamma \to 0$, but the constants of our scheme are larger, and of course Fiat and Tassa's scheme is deterministic. So, if one can afford using an alphabet of size $q = 2c + 1$, Fiat and Tassa's scheme is clearly the way to go, but for lower values of $q$, the $q$-ary dynamic Tardos scheme seems to be the best asymptotic scheme known so far.


\subsection{The universal Tardos scheme}
\label{sub:univtardos}

Besides the dynamic Tardos scheme, Laarhoven et al.\ \cite[Section V]{laarhoven11b} also show how to efficiently catch coalitions of a priori unknown sizes $c$, using a variant of the dynamic Tardos scheme known as the \textit{universal} Tardos scheme. With slightly longer codelengths and maintaining multiple accusation scores per user, one can guarantee that small coalitions are caught much faster. The divide-and-conquer construction can trivially be applied to this variant as well. In this case, the practical difficulty of bounding the number of colluders in each group even disappears, since the universal Tardos scheme does not require the distributor to provide values of $c$ anymore. One simply divides the set of users in $k$ groups, and assigns the parameters $(n, \eps_1, \eps_2)^{(t)} = (\frac{n}{k}, \frac{\eps_1}{k}, \frac{\eps_2}{k})$ to each group. Then, one can easily show that the scheme will catch any coalition with a codelength quadratic in the actual number of colluders. 


\section{Summary}
\label{sec:summary}

We have shown that with the divide-and-conquer approach, we can obtain schemes for alphabet sizes $q = kq_0$ which have codelengths approximately equal to the sum of $k$ times the codelength of a $q_0$-ary traitor tracing scheme. Applying this to the binary dynamic Tardos scheme of Laarhoven et al.\ \cite{laarhoven11b}, this leads to codelengths which are quadratic in the number of colluders $c$ and decreasing linearly in the alphabet size $q$. Thus, the codelengths of this construction match the $q$-ary static fingerprinting capacity of Boesten and \v{S}kori\'{c} \cite{boesten11} and Huang and Moulin \cite{huang12}, up to a constant factor. For $q$ growing almost linearly in $c$, the codelengths approach the asymptotic codelengths of Fiat and Tassa \cite{fiat01}, and improve upon the codelengths of Berkman et al.\ \cite{berkman01}.

There are several interesting open problems for future research in this area. We mention some below.

\subsection{The capacity of the dynamic fingerprinting game} To the best of our knowledge, no one has yet investigated whether the fingerprinting capacity game can be extended to the dynamic traitor tracing setting. Above, we compared our codelengths obtained from the dynamic Tardos scheme to the static fingerprinting capacity, but it would be more interesting to be able to compare these codelengths to (bounds on) the dynamic capacity. The above construction does make a start in this direction, by showing that the $q$-ary dynamic fingerprinting capacity is at least a factor $q/2$ higher than the binary dynamic fingerprinting capacity.

\subsection{The $q$-ary static Tardos scheme} \v{S}kori\'{c} et al.\ \cite{skoric08} gave a construction for $q$-ary Tardos codes, which are roughly a factor $O(\ln q)$ shorter than binary Tardos codes. It would be interesting to see if it is possible to construct $q$-ary Tardos codes which are a factor $O(q)$ shorter and approach the $q$-ary fingerprinting capacity. With the dynamic Tardos construction of Laarhoven et al.\ \cite{laarhoven11b} and our current results, this may then also lead to better dynamic traitor tracing schemes.

\subsection{Application to different schemes} Above we showed that our construction can be applied to the binary dynamic Tardos scheme, but we can also apply our results to $q_0$-ary dynamic Tardos schemes, or a completely different binary dynamic traitor tracing scheme. If someone finds better binary dynamic schemes, combined with our construction this would immediately lead to better $q$-ary dynamic traitor tracing schemes.

\subsection{Variants of the divide-and-conquer construction} One can think of many variants of the divide-and-conquer scheme, but these seem harder to analyze. For example, instead of using disjoint sets of symbols for each group, one could let the different alphabets overlap in a few symbols. Or, instead of always using the same division of colluders in groups, one may want to redo the division of users in groups for every position, or every time a user is disconnected. Analyzing these variants may lead to further improvements.

\end{document}